\newcommand{\concept}[1]{\textbf{#1}}
\newcommand{\meth}[1]{\textup{#1}}
\newcommand{\ty}[1]{\ensuremath{\textup{\textsf{#1}}}}
\newcommand{\val}[1]{\ensuremath{\textup{#1}}}
\newcommand{\var}[1]{\ensuremath{\textup{\texttt{#1}}}}
\DeclareMathOperator{\loc}{loc}
\DeclareMathOperator{\row}{row}
\DeclareMathOperator{\lstart}{line\ref{line:enq-start}}
\DeclareMathOperator{\lalloc}{line\ref{line:deq-alloc}}
\DeclareMathOperator{\orderpt}{orderpt}
\DeclareMathOperator{\lex}{lex}
\newcommand{\ltlex}{\ensuremath{<_{\lex}}}
\newtheorem{theorem}{Theorem}
\newtheorem{lemma}[theorem]{Lemma}
\title{Two-enqueuer queue in Common2}
\author{David Eisenstat\thanks{No current affiliation. 55 Autumn Street, New Haven, CT 06511, USA. \texttt{eisenstatdavid@gmail.com}}}
\date{}
\begin{document}
\maketitle

\begin{abstract}
The question of whether all shared objects with consensus number 2 
belong to Common2, 
the set of objects that can be implemented in a wait-free manner 
by any type of consensus number 2, 
was first posed by Herlihy. 
In the absence of general results, 
several researchers have obtained implementations 
for restricted-concurrency versions of FIFO queues. 
We present the first Common2 algorithm for a queue 
with two enqueuers and any number of dequeuers. 
\end{abstract}
\section{Introduction} Many concurrent algorithms employ first-in first-out (FIFO) queues, 
making the quality of queue implementations 
by particular synchronization primitives a practical concern. 
In this work, 
we restrict our attention to \concept{wait-free} implementations, 
where processes cannot take infinitely many steps 
without completing one of their operations. 
Wait-freedom is an especially strong fault-tolerance property, 
ensuring that processes make progress 
despite contention and unexpected delays; 
unsurprisingly, 
there are a number of impossibility results 
regarding wait-free implementations. 
Many of these follow from the consensus hierarchy 
of Herlihy~\cite{DBLP:journals/toplas/Herlihy91}, 
who defined the \concept{consensus number} of a data type $\ty{T}$ 
to be the least upper bound on all $n$ 
such that an $n$-process system 
with some collection of objects of type $\ty{T}$ or $\ty{Register}$ 
can implement consensus. 
Since the composition of wait-free simulations is wait-free, 
no type can implement a type with a higher consensus number. 
For example, \ty{Register}, which has consensus number $1$, 
cannot implement \ty{\ty{Queue}}, which has consensus number $2$. 

However, 
the consensus hierarchy does not let us determine 
the structure of the ``can implement'' relation 
for types with the same consensus number. 
Herlihy~\cite{DBLP:journals/toplas/Herlihy91} 
showed that in an $n$-process system, 
any type with consensus number $n' \ge n$ is \concept{universal}, 
that is, it can implement all types. 
He asked whether \ty{Fetch\&Add}, which has consensus number $2$, 
can implement all types with consensus number $2$ 
in systems with three or more processes. 
Several researchers have found implementations for specific types, 
but as of this writing, 
neither a universal implementation nor a counterexample is known. 

Afek, Weisberger, and Weisman showed 
that any type with consensus number $2$ can implement
\ty{Fetch\&Add}~\cite{DBLP:journals/jal/AfekW99,DBLP:conf/podc/AfekWW92} 
and \ty{Swap}~\cite{DBLP:conf/podc/AfekWW92,theses/Weisman}.\footnote{
In turn, 
\ty{Fetch\&Add} can implement all read-modify-write (RMW) types 
with commuting updates, 
and \ty{Swap} can implement all RMW types with overwriting updates. 
}
They defined Common2 to be the set of types 
that can be implemented by any type of consensus number $2$. 
Afek, Gafni, and Morrison~\cite{DBLP:journals/dc/AfekGM07} 
showed that \ty{Stack} is in Common2, 
improving on an implementation for two pushers 
by David, Brodsky and Fich~\cite{DBLP:conf/wdag/DavidBF05}. 
The status of \ty{Queue} remains unknown, however, 
despite the existence of several restricted implementations. 
When all enqueue operations have the same argument, 
\ty{Queue} and \ty{Stack} have the same specification, 
and the one-value \ty{Stack} implementation 
by David, Brodsky, and Fich~\cite{DBLP:conf/wdag/DavidBF05} 
is also a one-value \ty{Queue} implementation. 
Li~\cite{theses/Li} obtained an implementation 
for multiple values and one dequeuer from an algorithm 
by Herlihy and Wing~\cite{DBLP:journals/toplas/HerlihyW90}. 
He extended it to two dequeuers 
via the universal implementation technique 
and conjectured that there is no three-dequeuer implementation. 
David~\cite{DBLP:conf/wdag/David04,theses/David} 
refuted this conjecture by giving 
an implementation for one enqueuer and any number of dequeuers, 
observing, however, that its enqueue operation 
is not amenable to the same technique. 
We describe a variant of David's algorithm 
that admits a two-enqueuer extension, 
leaving open the case of three enqueuers and three dequeuers. 
The known queue implementations are summarized in Table~\ref{table:algos}. 

Given that modern architectures typically offer 
a primitive of consensus number $\infty$, 
our implementation is of mainly theoretical interest, 
though we believe that it contributes to a better understanding 
of the synchronization required to implement \ty{Queue}. 
For this reason, we have not attempted to reduce the space requirements of our algorithms. 

\begin{table}
\centering
\caption{Summary of known wait-free queue implementations from a type of consensus number $2$ in an $n$-process system}
\begin{tabular}{c c c c}
\hline \hline
\textbf{Enqueuers} & \textbf{Dequeuers} & \textbf{Distinct values} & \textbf{References} \\
\hline \hline
$1$ & $n$ & arbitrary & David~\cite{DBLP:conf/wdag/David04,theses/David} \\
$n$ & $2$ & arbitrary & Li~\cite{theses/Li} \\
$n$ & $n$ & $1$ & David, Brodsky, and Fich~\cite{DBLP:conf/wdag/DavidBF05} \\
$2$ & $n$ & arbitrary & this work \\
\hline \hline
\end{tabular}
\label{table:algos}
\end{table}

\section{Model} The setting for this work 
is the standard asynchronous shared-memory model. 
We describe this model only informally; 
the interested reader should consult a formal description 
such as the one by Herlihy~\cite{DBLP:journals/toplas/Herlihy91}. 

A \concept{shared-memory system} consists 
of $n$ sequential \concept{processes} 
and a collection of shared (base) \concept{objects}. 
Processes communicate with other processes 
by performing operations on the objects. 
Each object has a \concept{type}, 
which specifies the sequential behavior 
of the methods that it supports 
as functions from an object state to a return value and a new state. 
Table~\ref{table:types} lists each type used in this paper 
along with its consensus number, 
the methods that it supports, and their defining functions. 
A \concept{schedule} is an arbitrary sequence of processes; 
in the wait-free setting, there are no fairness conditions. 
Each schedule gives rise to an \concept{execution}, 
where starting from some initial state, 
the processes take \concept{steps} according to the schedule. 
When a process takes a step, 
it selects an operation based on the return values of past operations 
and performs it atomically. 

In order to reason about wait-free implementations, 
we augment the base objects with a virtual object 
of the type being implemented. 
Whenever a process attempts to perform an operation on the latter, 
control is transferred to a black-box subroutine, 
which simulates the operation 
by performing finitely many operations on base objects 
and returning a value. 
The correctness property that we consider is 
\concept{linearizability}~\cite{DBLP:journals/toplas/HerlihyW90}. 
In an execution with operations $o_1$ and $o_2$ 
on the virtual object (\concept{virtual operations} hereafter), 
the operation $o_1$ \concept{precedes} the operation $o_2$ 
if $o_1$ returns before $o_2$ is invoked. 
An execution is \concept{linearizable} 
if there exists a total order $\prec$ of virtual operations 
such that first, 
if a virtual operation $o_1$ precedes a virtual operation $o_2$, 
then $o_1 \prec o_2$, and second, 
the return values of the virtual operations 
are consistent with those obtained by performing 
the operations in sequence according to the order $\prec$. 

\begin{table}
\centering
\caption{Types used in this paper}
\begin{tabular}{c c c c}
\hline \hline
\textbf{Type} & \textbf{Consensus number} & \textbf{Method} & \textbf{Defining function} \\
& & & $\textup{Obj.\ State} \rightarrow \textup{Return Val.} \times \textup{Obj.\ State}$ \\
\hline \hline
\ty{Consensus} & $\infty$ & $\meth{decide}(x)$ & $y \mapsto \begin{cases}(x, x) & \textup{if }y = \bot\\(y, y) & \textup{if }y \neq \bot\end{cases}$ \\
\hline
\ty{Fetch\&Add} & $2$ & $\meth{f\&a}(x)$ & $y \mapsto (y, y + x)$ \\
\hline
\ty{Queue} & $2$ & $\meth{deq}()$ & $\langle \rangle \mapsto (\bot, \langle \rangle)$ \\
& & & $\langle x \rangle \circ q' \mapsto (x, q')$ \\
& & $\meth{enq}(x)$ & $q \mapsto (\val{Ok}, q \circ \langle x \rangle)$ \\
\hline
\ty{Register} & $1$ & $\meth{read}()$ & $y \mapsto (y, y)$ \\
& & $\meth{write}(x)$ & $y \mapsto (\val{Ok}, x)$ \\
\hline
\ty{Stack} & $2$ & $\meth{pop}()$ & $\langle \rangle \mapsto (\bot, \langle \rangle)$ \\
& & & $s' \circ \langle x \rangle \mapsto (x, s')$ \\
& & $\meth{push}(x)$ & $s \mapsto (\val{Ok}, s \circ \langle x \rangle)$ \\
\hline
\ty{Swap} & $2$ & $\meth{swap}(x)$ & $y \mapsto (y, x)$ \\
\hline \hline
\end{tabular}

$\langle \cdots \rangle$ denotes a sequence.
$\circ$ denotes concatenation.
$\bot$ is a return value that indicates failure.
$\val{Ok}$ indicates success in the absence of a value to return.
\label{table:types}
\end{table}

\section{Queue implementations} David's~\cite{DBLP:conf/wdag/David04,theses/David} 
and Li's~\cite{theses/Li} implementations 
can be thought of as variations on Algorithm~\ref{algo:sesd}, 
a simple algorithm in which a single enqueuer writes 
the enqueued items in order for consumption by a single dequeuer. 
At the core of both implementations is the idea 
that either the enqueuers or the dequeuers, but not both, 
can access the array out of order. 

In Li's algorithm, 
enqueuers divide up the locations in the array 
with a $\ty{Fetch\&Add}$ object. 
Because an enqueuer may stall in the interval 
between reserving a location and writing it, 
items may be written out of order---an unavoidable consequence 
of not having a primitive able to achieve consensus among enqueuers. 
To cope, the dequeuer searches all reserved locations for an item; 
fortunately, 
it need not consider locations reserved after the dequeue began. 
Since the only operations performed by the dequeuer on shared objects are reads, 
a type of consensus number $n$ allows $n$ dequeuers to simulate a single dequeuer and schedule their dequeue operations on that dequeuer by Herlihy's universal construction. 

David's algorithm takes the opposite approach, 
where the dequeuers divide up the array. 
Unfortunately, 
a dequeuer may reserve a location 
to which the enqueuer has not yet written, 
in which case we say that the dequeuer 
has \concept{overtaken} the enqueuer. 
The simple solutions, 
where the dequeuer either waits for a value 
or just returns $\bot$, 
are not sufficient; 
the result is an algorithm that is not wait-free 
or that loses enqueued items. 

David's solution to this problem 
is for the enqueuer to recognize when it has been overtaken 
and try again in a way that guarantees success. 
The array of items becomes 
a two-dimensional array of $\ty{Swap}$ objects, 
and dequeuers read locations destructively 
by swapping in a value $\top$ distinct from the initial value $\bot$. 
When the enqueuer is overtaken, it swaps out the value $\top$. 
It is in this case that the second dimension is used: 
the enqueuer writes the item to the beginning of the next row 
before informing the dequeuers that this row is now the current one. 
The dequeuers that reserved empty locations in the previous row 
return $\bot$, 
and their operations can be linearized just before the enqueue, 
when the queue is empty. 

There is no straightforward adaptation 
of David's algorithm to two enqueuers, 
because with two enqueuers swapping an item into the same location, 
the second swap may return the item, 
leaving the enqueuer that performs it 
unsure as to whether the other swap returns $\top$ or $\bot$. 
In Algorithm~\ref{algo:semd}, 
we use a different mechanism for detecting 
when the enqueuer has been overtaken. 
Before a dequeuer begins operating on a location $(i, j)$, 
it writes $\val{true}$ to $\var{deqActive}[i, j]$. 
When the enqueuer finishes with a location $(i, j)$, 
it reads $\var{deqActive}[i, j]$. 
If the read returns $\val{true}$, 
the enqueuer assumes that it has been overtaken. 
This conservative assumption is not always correct, 
and without further modifications, some items may be returned twice! 
We add a layer of indirection to address this issue: 
the two-dimensional array contains \concept{indexes} of items, 
and the dequeuers use a $\ty{Fetch\&Add}$ object 
to establish exclusive ownership. 
A dequeuer that fails to win an item must retry; 
by retrying in the same row, 
it turns out that at most two retries are necessary. 

Unlike David's algorithm, Algorithm~\ref{algo:semd} is amenable to an extension of Li's trick. 
We present the modified enqueue method following the proof of correctness for one enqueuer. 

\begin{algorithm}
\caption{Single-enqueuer single-dequeuer queue (folklore)}
\begin{algorithmic}[1]
\STATE $\var{head} : \ty{integer}$ \COMMENT {enqueuer-local; initially $0$}
\STATE $\var{item} : \arrayof{0..}{\ty{item}}$ \COMMENT {initially $\bot$}
\STATE $\var{tail} : \ty{integer}$ \COMMENT {dequeuer-local; initially $0$}
\bigskip
\METH {$\meth{enq}(\var{x} : \ty{item})$}
    \STATE $\var{item}[\var{head}] := \var{x}$
    \STATE $\var{head} := \var{head} + 1$
\ENDMETH
\bigskip
\METH {$\meth{deq}() : \ty{item}$}
    \STATE $\var{x} := \var{item}[\var{tail}]$
    \IF {$\var{x} \neq \bot$}
        \STATE $\var{tail} := \var{tail} + 1$
    \ENDIF
    \RETURN {$\var{x}$}
\ENDMETH
\end{algorithmic}
\label{algo:sesd}
\end{algorithm}

\begin{algorithm}
\caption{Single-enqueuer multiple-dequeuer queue}
\begin{algorithmic}[1]
\STATE $\var{deqActive} : \arrayof{0.., 0..}{\ty{boolean}}$ \COMMENT {initially $\val{false}$}
\STATE $\var{enqCount} : \ty{integer}$ \COMMENT {enqueuer-local; initially $0$}
\STATE $\var{head} : \ty{integer}$ \COMMENT {enqueuer-local; initially $0$}
\STATE $\var{item} : \arrayof{1..}{\ty{item}}$
\STATE $\var{itemIndex} : \arrayof{0.., 0..}{\ty{integer}}$ \COMMENT {initially $0$}
\STATE $\var{itemTaken} : \arrayof{1..}{\ty{Fetch\&Add}}$ \COMMENT {initially $0$}
\STATE $\var{row} : \ty{integer}$ \COMMENT {initially $0$}
\STATE $\var{tail} : \arrayof{0..}{\ty{Fetch\&Add}}$ \COMMENT {accessed only by dequeuers; initially $0$}
\bigskip
\METH {$\meth{enq}(\var{x} : \ty{item})$}
    \STATE $\var{enqCount} := \var{enqCount} + 1$ \label{line:enq-start}
    \STATE $\var{item}[\var{enqCount}] := \var{x}$
    \STATE $\var{itemIndex}[\var{row}, \var{head}] := \var{enqCount}$
    \IF {$\var{deqActive}[\var{row}, \var{head}]$}
        \STATE $\var{itemIndex}[\var{row} + 1, 0] := \var{enqCount}$
        \STATE $\var{head} := 1$
        \STATE $\var{row} := \var{row} + 1$
    \ELSE
        \STATE $\var{head} := \var{head} + 1$
    \ENDIF
\ENDMETH
\bigskip
\METH {$\meth{deq}() : \ty{item}$}
    \STATE $\var{i} := \var{row}$
    \LOOP
        \STATE $\var{j} := \var{tail}[\var{i}].\meth{f\&a}(1)$ \label{line:deq-alloc}
        \STATE $\var{deqActive}[\var{i}, \var{j}] := \val{true}$
        \STATE $\var{k} := \var{itemIndex}[\var{i}, \var{j}]$
        \IF {$\var{k} = 0$}
            \RETURN {$\bot$}
        \ELSIF {$\var{itemTaken}[\var{k}].\meth{f\&a}(1) = 0$}
            \RETURN {$\var{item}[\var{k}]$}
        \ENDIF
    \ENDLOOP
\ENDMETH
\end{algorithmic}
\label{algo:semd}
\end{algorithm}

\section{Proof of correctness} The main result in this section is the following theorem, 
which we establish by a sequence of lemmas. 

\begin{theorem}
\label{theorem:correct}
Algorithm~\ref{algo:semd} 
is a wait-free linearizable implementation 
of the type $\ty{Queue}$ for one enqueuer and any number of dequeuers 
from the types $\ty{Fetch\&Add}$ and $\ty{Register}$. 
\end{theorem}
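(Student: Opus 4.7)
The plan is to split Theorem~\ref{theorem:correct} into wait-freedom and linearizability, each proved by a separate lemma.

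\textbf{Wait-freedom.} Since $\meth{enq}$ is straight-line, the task is to bound the main loop in $\meth{deq}$. I will show it iterates at most three times. The key structural fact is that each value $k$ of $\var{enqCount}$ occupies at most two cells of $\var{itemIndex}$: the primary cell $(i, h)$ from the assignment $\var{itemIndex}[\var{row}, \var{head}] := \var{enqCount}$, and, when the enqueuer is overtaken, the duplicate cell $(i+1, 0)$ written in the then-branch. Consequently at most two dequeuers ever call $\var{itemTaken}[k].\meth{f\&a}(1)$, so at most one of them loses the race. Within a single row $i$, a dequeuer's successive $\var{tail}[i].\meth{f\&a}(1)$ results strictly increase, and the only positions of row $i$ whose index value is duplicated elsewhere are $(i, 0)$ (duplicate of $(i-1, h_{i-1})$) and $(i, h_i)$ (duplicate of $(i+1, 0)$). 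Hence a single dequeuer can lose at most twice.

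\textbf{Linearization of successful operations.} Let $E_k$ denote the enqueue whose $\var{enqCount}$ is $k$. I will linearize $E_k$ at its write $\var{itemIndex}[\var{row}, \var{head}] := \var{enqCount}$, and each successful $\meth{deq}$ returning $x_k$ at the step where $\var{itemTaken}[k].\meth{f\&a}(1)$ returns $0$. Enqueues are then ordered by $\var{enqCount}$, matching the serial enqueuer, and each $E_k$ precedes the matching dequeue because the dequeuer must read $k$ from $\var{itemIndex}$ before claiming $\var{itemTaken}[k]$. I will show that successful dequeues are ordered by the $\var{enqCount}$ of the items they return and that this respects real time: dequeues in the same row are ordered by the winning reserved position $j$, itself ordered by the atomic $\meth{f\&a}$s on $\var{tail}[i]$; dequeues in different rows inherit the monotone order of $\var{row}$, since each dequeuer reads the current $\var{row}$ at invocation.

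\textbf{Bot-returns.} This is the main obstacle. Each bot-return $D$ reserves a cell $(i, j)$ whose read of $\var{itemIndex}[i, j]$ yields $0$, which forces the existence of an overtaking enqueue $E_{k^*}$ writing $\var{itemIndex}[i, h_i]$ with $h_i \le j$; I will place $D$ immediately before $E_{k^*}$ in the linearization. Two things then need to be checked. First, $D$ and $E_{k^*}$ are concurrent in real time, because $D$ reads $\var{row} = i$ before the enqueuer advances $\var{row}$ to $i+1$ at the end of $E_{k^*}$, so placing $D$ before $E_{k^*}$ respects the precedence order. Second, the reconstructed queue at that point must be empty, i.e.\ every $k' < k^*$ is dequeued before $D$. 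For this I will show that for each such $k'$, the primary or overtake position holding $k'$ has been reserved at a time that precedes $D$'s return (by comparing row reads and $\var{tail}[\cdot].\meth{f\&a}$ events), so the dequeuer that ultimately claims $k'$ is concurrent with (or precedes) $D$ and may be safely placed before it in the order. Pending claimants are completed by the standard extension. Multiple bot-returns from the same overtake all share the single empty-queue gap before $E_{k^*}$. The technically delicate part is the case analysis for the previous-overtake cell $(i, 0)$, where the true claimant may lie in either row $i-1$ or row $i$, together with an induction on row number to handle items from earlier rows.
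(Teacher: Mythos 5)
Your wait-freedom argument is the paper's argument, and your ordering of successful operations (row-major, then by reserved position within a row, which the paper realizes as a lexicographic order on $(\row(o),\orderpt(o))$) also matches the paper. The genuine gap is the placement of $\bot$-returns. First, the overtaking enqueue $E_{k^*}$ you anchor to need not exist: in a finite execution a dequeuer can reserve $(i,j)$, read $0$, and return $\bot$ while the enqueuer simply stops before ever reaching a position of row $i$ at which it detects an overtake (or performs no enqueue at all), so there is no overtake in row $i$. Second, and more seriously, even when $E_{k^*}$ exists, putting $D$ immediately before it can violate real-time precedence. Consider: positions $0,\dots,j-1$ of row $i$ are handed out by $\var{tail}[i]$, but the dequeuer holding position $h_m<j$ stalls after its $\meth{f\&a}$ and before writing $\var{deqActive}[i,h_m]$; $D$ then reserves $(i,j)$, reads $0$, and returns $\bot$; only afterwards does the enqueuer perform $E_m$, which writes $\var{itemIndex}[i,h_m]$, reads $\var{deqActive}[i,h_m]=\val{false}$, and completes without being overtaken; the first overtake $E_{k^*}$ in row $i$ occurs later still, at some position in $(h_m, j]$. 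Then $D$ precedes $E_m$ in real time, but your rule linearizes $D$ after $E_m$ because $E_m\prec E_{k^*}$. The paper avoids both problems by anchoring a $\bot$-returning $D$ within its row at the time of its last $\var{tail}$-$\meth{f\&a}$ and proving emptiness at that earlier point via the completeness lemma; your emptiness argument survives that change essentially intact.

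Two smaller points. The step at which $\var{itemTaken}[k].\meth{f\&a}(1)$ returns $0$ cannot serve as the linearization point of a successful dequeue: two dequeuers in the same row can win their (distinct) $\var{itemTaken}$ cells in the opposite order from their reserved positions, so those steps are not ordered by item index. Since your actual order is by reserved position, you should drop that claimed linearization point and argue real-time consistency directly from the $\meth{f\&a}$s on $\var{tail}$, as you already sketch (taking care, as the paper does with its $\max(\,\cdot\,,\lstart(e)+\frac12)$ adjustment, that a dequeue is never placed before its matching enqueue). Finally, the completeness property---every item with a smaller index than the one returned is claimed by a dequeue that can be linearized earlier---is needed for successful dequeues too, not only for $\bot$-returns; the paper proves it once, uniformly, for all dequeue operations.
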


The following lemma implies (bounded) wait-freedom. 

\begin{lemma}
\label{lemma:waitfree}
There is a constant $U$ such that in all executions, 
$\meth{enq}$ and $\meth{deq}$ operations complete in $U$ steps or less. 
\end{lemma}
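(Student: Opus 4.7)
The plan is to handle $\meth{enq}$ and $\meth{deq}$ separately. The $\meth{enq}$ method is straight-line code with no loops, so it trivially completes in a constant number of base operations. All the work is in bounding the main loop of $\meth{deq}$; since each iteration performs only a bounded number of operations, it suffices to bound the number of iterations by a constant.

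Fix a $\meth{deq}$ call by process $D$, which reads $i := \var{row}$. In iteration $t$, $D$ obtains $j_t$ from $\var{tail}[i].\meth{f\&a}(1)$, and the $j_t$ are strictly increasing in $t$. For $D$ to fail to return in iteration $t$, we need $k_t := \var{itemIndex}[i, j_t] \neq 0$ and some other process $D_t$ to have already won the f\&a on $\var{itemTaken}[k_t]$; that $D_t$ must have obtained $k_t$ by reading $\var{itemIndex}[i', j'] = k_t$ at some $(i', j')$.

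The crux is a structural observation read directly off the $\meth{enq}$ code: each value of $\var{enqCount}$ is written into $\var{itemIndex}$ at most twice, unconditionally at $(\var{row}, \var{head})$ and additionally at $(\var{row} + 1, 0)$ only on the overtaken branch. Relative to row $i$, the only $j$ at which such duplication can arise are $j = 0$ (a carry-over into row $i$) and the ultimate $j = H_i$, the largest position the enqueuer ever writes in row $i$. Hence if $0 < j_t < H_i$, the value $k_t$ is unique to $(i, j_t)$, so $D_t$ must also have read $\var{itemIndex}[i, j_t]$; but $\var{tail}[i].\meth{f\&a}(1)$ assigns each $j$ to a unique process, so $D_t = D$, a contradiction. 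Therefore failing iterations occur only at $j_t \in \{0, H_i\}$, giving at most two failures, so $D$ returns within three iterations, and $U$ can be taken to be three times the per-iteration cost of $\meth{deq}$ plus the fixed cost of $\meth{enq}$.

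The main obstacle will be making ``the ultimate $j = H_i$'' precise at the moment $D_t$ acts, since when $D_t$ reads the duplicate at $(i+1, 0)$ the enqueuer has in fact already executed the overtaken branch and moved past row $i$, which pins down the value of $H_i$; showing this requires a careful case analysis over the possible interleavings of $D$, $D_t$, and the enqueuer around the transition from row $i$ to row $i + 1$.
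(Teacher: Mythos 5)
Your argument is correct and is essentially the paper's own proof: a failed iteration forces the read index to have been written to two distinct locations of \var{itemIndex}, such duplicated values occur only at a row boundary (last position of one row, position $0$ of the next), and a dequeuer stays in one row, so at most two iterations can fail. The interleaving worry in your last paragraph is unnecessary --- the duplication claim is a static property of which locations the enqueuer ever writes with a given value of \var{enqCount}, so no temporal case analysis around the row transition is needed.
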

\begin{proof}
For the $\meth{enq}$ method, which has no loops, this is clear. 
The $\meth{deq}$ method has one loop, 
but upon further examination, 
we find that in the worst case, 
the loop body executes in its entirety at most twice.
If a dequeuer executes the loop body without returning, 
the local variable $\var{k}$ is nonzero, 
and $\var{itemTaken}[\var{k}].\meth{f\&a}(1)$ returns a nonzero value. 
Another dequeuer, then, must set $\var{k}$ to the same value 
and perform $\var{itemTaken}[\var{k}].\meth{f\&a}(1)$ first. 
Both dequeuers read the value of $\var{k}$ 
from locations in the array $\var{itemIndex}$, 
and since each location is accessed by at most one dequeuer, 
this value is written to two different locations. 
Any value written to two locations in the array $\var{itemIndex}$ 
is the largest written to one row 
and the smallest written to the next, 
so it is impossible for a $\meth{deq}$ operation, 
which reads values from only one row, 
to read more than two such values. 
\end{proof}

More difficult is showing that Algorithm~\ref{algo:semd} is linearizable. 
Any execution that is not linearizable 
has a finite prefix that is also not linearizable, 
that is, linearizability is a safety property. 
Moreover, by wait-freedom, 
any finite execution has a finite continuation 
in which processes finish their current queue operations 
without starting new ones. 
If the longer execution is linearizable, 
then so is its prefix, by the same order of operations. 
It thus suffices to show that any finite execution 
where all operations finish is linearizable. 

Fix a particular finite execution where, without loss of generality, 
all operations finish and all enqueued items are distinct. 
We construct a linearization order $\prec$ as follows. 
An $\meth{enq}$ operation $e$ \concept{matches} 
a $\meth{deq}$ operation $d$ 
if $e$ enqueues the item that $d$ dequeues. 
For $\meth{deq}$ operations $d$, 
let $\loc(d)$ be the last location $(i, j)$ 
of $\var{itemIndex}$ read by $d$. 
For $\meth{enq}$ operations $e$ 
that write exactly one location $(i, j)$ 
in the array $\var{itemIndex}$, let $\loc(e) = (i, j)$. 
For $\meth{enq}$ operations $e$ 
that write two locations $(i, j)$ and $(i + 1, 0)$, 
there is a unique $\meth{deq}$ operation $d$ 
that writes $\var{deqActive}[i, j]$. 
Let $\loc(e) = (i, j)$ if $e$ matches $d$ 
and let $\loc(e) = (i + 1, 0)$ otherwise. 
For operations $o$, let $\row(o)$ be the first coordinate of $\loc(o)$. 

\begin{lemma}
\label{lemma:unique-match}
No operation matches more than one other operation. 
\end{lemma}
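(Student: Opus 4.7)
The plan is to split the claim by operation type. A $\meth{deq}$ operation can match at most one $\meth{enq}$ trivially: it returns a single item, and we have assumed without loss of generality that all enqueued items are distinct, so at most one $\meth{enq}$ can enqueue that item. The real content is bounding the number of $\meth{deq}$ operations matching any given $\meth{enq}$, which I would reduce to a \ty{Fetch\&Add} uniqueness argument.

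First I would observe that the single enqueuer assigns a distinct positive integer $k = \var{enqCount}$ to each $\meth{enq}$ invocation at \lstart, and writes its argument to $\var{item}[k]$ exactly once. Thus the item enqueued by a particular $\meth{enq}$ operation $e$ is exactly $\var{item}[k_e]$ for the unique $k_e$ assigned to $e$. Matching $e$ with a deq $d$ therefore means $d$ returns $\var{item}[k_e]$.

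Next I would inspect the $\meth{deq}$ code to see when a particular $\var{item}[k]$ can be returned. The only non-$\bot$ return statement yields $\var{item}[\var{k}]$ in the branch guarded by $\var{itemTaken}[\var{k}].\meth{f\&a}(1) = 0$. Since $\var{itemTaken}[k]$ is initialized to $0$ and every invocation increments it by $1$, the $\ty{Fetch\&Add}$ semantics in Table~\ref{table:types} imply that at most one invocation of $\var{itemTaken}[k].\meth{f\&a}(1)$ over the entire execution returns $0$. Hence at most one $\meth{deq}$ operation returns the value $\var{item}[k_e]$, and so $e$ matches at most one $\meth{deq}$.

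I do not expect a genuine obstacle here; the main thing to be careful about is that $\var{item}[k]$ is actually defined (i.e., has been written) by the time a dequeuer reads it. This follows because the single enqueuer writes $\var{item}[\var{enqCount}]$ before writing $\var{enqCount}$ into $\var{itemIndex}[\cdot,\cdot]$, and a dequeuer only reads $\var{item}[\var{k}]$ for a value $\var{k} = k$ it obtained from $\var{itemIndex}$; since the \var{itemIndex} array is initialized to $0$ and the dequeuer has already tested $\var{k} \neq 0$, the corresponding write to $\var{item}[k]$ has already occurred in the single enqueuer's program order.
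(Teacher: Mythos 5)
Your proof is correct and follows essentially the same route as the paper's: the distinctness assumption on enqueued items handles the $\meth{deq}$-to-$\meth{enq}$ direction, and the fact that only the first $\var{itemTaken}[k].\meth{f\&a}(1)$ returns $0$ handles the $\meth{enq}$-to-$\meth{deq}$ direction. Your additional remark about $\var{item}[k]$ being written before any dequeuer reads it is a fine extra precaution but not needed for the uniqueness claim itself.
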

\begin{proof}
By assumption, 
no item is enqueued more than once, 
so no item is written to two locations in the array $\var{item}$. 
In order to return an item $\var{item}[k]$, 
a $\meth{deq}$ operation $d$ must be the first 
to access $\var{itemTaken}[k]$, 
ensuring that $d$ and the $\meth{enq}$ operation 
that writes $\var{item}[k]$ are uniquely matched. 
\end{proof}

\begin{lemma}
\label{lemma:match-loc}
If an $\meth{enq}$ operation $e$ 
matches a $\meth{deq}$ operation $d$, 
then $d$ does not precede $e$ and $\loc(e) = \loc(d)$. 
\end{lemma}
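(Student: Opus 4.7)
The plan is to first identify, for a matching pair $(e, d)$, the single cell of \var{itemIndex} through which the match passes, and then reconcile that cell with both $\loc(d)$ and the case-based definition of $\loc(e)$. Writing $k$ for the value of \var{enqCount} during $e$'s execution, $d$ must return $\var{item}[k]$, so at some iteration $d$ reads $k$ from a cell $(i, j)$ of \var{itemIndex} and then wins the \meth{f\&a} on $\var{itemTaken}[k]$. That iteration is necessarily the last, because a nonzero $k$ together with a winning \meth{f\&a} triggers an immediate return; hence $\loc(d) = (i, j)$.

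The non-precedence claim then falls out for free: since only $e$ ever writes $k$ into \var{itemIndex}, the read of $k$ by $d$ cannot happen before the corresponding write by $e$, so $e$ must have begun before $d$ ends, and $d$ does not precede $e$.

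For $\loc(e) = \loc(d)$ I would split on how many cells $e$ writes. If $e$ writes only one cell $(i', j')$ then $\loc(e) = (i', j')$ by definition, and $(i', j')$ is the only cell containing $k$, forcing $\loc(d) = (i', j')$ as well. If $e$ writes two cells $(i', j')$ and $(i'+1, 0)$, let $d^*$ be the unique dequeuer that writes $\var{deqActive}[i', j']$. Uniqueness of the values returned by $\var{tail}[i'].\meth{f\&a}(1)$ makes $d^*$ the only dequeuer that ever reads $\var{itemIndex}[i', j']$, and the analogous argument for $\var{tail}[i'+1].\meth{f\&a}(1)$ singles out at most one dequeuer $d'$ that reads $\var{itemIndex}[i'+1, 0]$. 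Hence $d \in \{d^*, d'\}$, and pairing $d = d^*$ with the definition $\loc(e) = (i', j')$ (``$e$ matches $d^*$'') and $d = d'$ with $\loc(e) = (i'+1, 0)$ (``$e$ does not match $d^*$'') reproduces the case split in the definition of $\loc(e)$ exactly.

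The step I expect to be most delicate is the one that confines each dequeuer to a single row: because a dequeuer's local \var{i} is assigned once at the top of \meth{deq} and never changed, $d^*$'s reads of \var{itemIndex} all lie in row $i'$ and $d'$'s in row $i'+1$, which is what forbids a dequeuer of row $i'+1$ from snatching $k$ out of $\var{itemIndex}[i', j']$ (and vice versa). This is a one-line code inspection but is the linchpin that makes the two subcases both exhaustive and disjoint.
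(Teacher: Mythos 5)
Your proof is correct and follows the same underlying argument as the paper's (which compresses the whole case analysis into ``$d$ reads the index from the location $e$ writes it, so $\loc(e)=\loc(d)$ by definition''); you have simply made explicit the details the paper elides, namely that the \texttt{f\&a} on \texttt{tail} gives each cell a unique reader, that each dequeuer stays in one row, and that uniqueness of matching rules out $d^*$ in the $d=d'$ subcase. No gaps.
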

\begin{proof}
The operation $d$ reads the index of the enqueued item 
from the same location to which $e$ writes that index. 
Consequently, $d$ cannot precede $e$, 
and $\loc(e) = \loc(d)$ by definition. 
\end{proof}

For $\meth{enq}$ operations $e$, 
let $\orderpt(e) = \lstart(e)$ be the time 
at which $e$ executes line~\ref{line:enq-start}, 
where the \concept{time} at which a step is taken 
is the total number of steps that are taken before it. 
For $\meth{deq}$ operations $d$, 
let $\lalloc(d)$ be the latest time 
at which $d$ executes line~\ref{line:deq-alloc}. 
If $d$ matches an $\meth{enq}$ operation $e$, 
let $\orderpt(d) = \max(\lalloc(d), \lstart(e) + \frac{1}{2})$; 
otherwise, let $\orderpt(d) = \lalloc(d)$. 
For operations $o_1$ and $o_2$, 
write $o_1 \prec o_2$ 
if $(\row(o_1), \orderpt(o_1)) \ltlex (\row(o_2), \orderpt(o_2))$, 
where the symbol $\ltlex$ denotes lexicographic order. 

\begin{lemma}
\label{lemma:total-order}
The relation $\prec$ is a total order. 
\end{lemma}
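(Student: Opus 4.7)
The plan is to leverage the fact that lexicographic order on numerical pairs is already a strict total order, so $\prec$ inherits irreflexivity, asymmetry, and transitivity for free. The only nontrivial obligation is trichotomy: for any two distinct operations $o_1 \neq o_2$, the pairs $(\row(o_1), \orderpt(o_1))$ and $(\row(o_2), \orderpt(o_2))$ must differ. Since two pairs are distinct as soon as a single coordinate is, this reduces to showing that $\orderpt$ assigns distinct values to distinct operations.

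First I would partition orderpoint values by parity. The three forms that $\orderpt$ can take are $\lstart(e)$, $\lalloc(d)$, and $\lstart(e) + \tfrac{1}{2}$; the first two are integer step counts, and the third is a half-integer. Any two operations whose orderpoints lie in different parity classes are automatically distinguished.

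Next I would handle the two parity classes separately. Among integer-valued orderpoints, each value is the time of a specific atomic step---either a line~\ref{line:enq-start} step of a particular $\meth{enq}$ invocation or the final line~\ref{line:deq-alloc} step of a particular $\meth{deq}$ invocation. Since the execution is a totally ordered sequence of atomic steps, no two steps share a time, and in particular distinct enqueues have distinct line~\ref{line:enq-start} times (there is only one enqueuer, executing sequentially), distinct dequeues have distinct final line~\ref{line:deq-alloc} times, and a step by the enqueuer is distinct from any step by a dequeuer. Among half-integer orderpoints, each operation is a matched dequeue $d$ with $\orderpt(d) = \lstart(e) + \tfrac{1}{2}$ for its matched enqueue $e$; by Lemma~\ref{lemma:unique-match} two such dequeues $d_1 \neq d_2$ are matched to distinct enqueues $e_1 \neq e_2$, whose line~\ref{line:enq-start} times already differ by the enqueuer-sequentiality argument above.

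The main (and really only) subtlety is checking that a matched enqueue--dequeue pair $(e, d)$ does not collide in $\orderpt$, since $d$ is bumped to $\lstart(e) + \tfrac{1}{2}$ precisely in order to avoid such a collision. Explicitly, $\orderpt(e) = \lstart(e)$ is an integer while $\orderpt(d) = \max(\lalloc(d), \lstart(e) + \tfrac{1}{2})$; if the max resolves to $\lalloc(d)$ then $\lalloc(d) > \lstart(e)$ strictly (the two are steps by different processes and thus cannot coincide either), and otherwise the two values differ in parity. This is the sole place where the $+\tfrac{1}{2}$ offset in the definition of $\orderpt$ is essential, and everything else is straightforward bookkeeping.
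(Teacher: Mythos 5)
Your proof is correct and follows essentially the same route as the paper's: reduce totality of $\prec$ to injectivity of $\orderpt$, observe that integer-valued orderpoints are times of atomic steps owned by a single operation, and handle the half-integer orderpoints of matched dequeues via Lemma~\ref{lemma:unique-match} and the $+\tfrac{1}{2}$ parity offset. Your treatment is somewhat more explicit about the case split (including the case where the $\max$ resolves to $\lalloc(d)$), but the substance is the same.
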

\begin{proof}
It suffices to show that the function $\orderpt$ is one-to-one. 
For operations $o$, 
either $o$ is unique in taking a step at time $\orderpt(o)$, 
or $o$ is a $\meth{deq}$ operation 
that matches an $\meth{enq}$ operation $e$ 
and $\orderpt(o) = \lstart(e) + \frac{1}{2}$. 
In the latter case, 
no operation $o' \neq o$ satisfies $\orderpt(o') = \orderpt(o)$, 
since by Lemma~\ref{lemma:unique-match}, 
the only operation that matches $e$ is $o$. 
\end{proof}

\begin{lemma}
\label{lemma:precedes}
If $o_1$ and $o_2$ are operations such that $o_1$ precedes $o_2$, 
then $o_1 \prec o_2$. 
\end{lemma}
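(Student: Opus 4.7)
The plan is to reduce to the equal-row case and then compare order points. First I would argue $\row(o_1) \le \row(o_2)$ using monotonicity of the shared $\var{row}$ variable together with the hypothesis that $o_1$ precedes $o_2$; if this inequality is strict the claim follows immediately, so the remaining work is to show $\orderpt(o_1) < \orderpt(o_2)$ in the case where the two rows coincide.

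For the row inequality I would rely on two facts about the shared $\var{row}$: its only write is an increment, so it is monotone non-decreasing; and the $\row$ value of every operation is bracketed between the value of $\var{row}$ at some step of that operation and its value immediately after the operation completes. Concretely, a \meth{deq} sets $\var{i} := \var{row}$ once and keeps $\var{i}$ as its row thereafter, while an \meth{enq} started with $\var{row} = r$ writes to rows in $\{r, r+1\}$ and leaves the shared $\var{row}$ at whichever of the two it wrote last. Since every step of $o_1$ precedes every step of $o_2$, the value of $\var{row}$ at $o_2$'s first relevant step is at least its value after $o_1$ finishes, yielding $\row(o_1) \le \row(o_2)$.

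In the equal-row case I would split on the types of $o_1$ and $o_2$. The uniform observation is that $\orderpt(o_2)$ is always at least $\lstart(o_2)$ (if $o_2$ is an \meth{enq}) or $\lalloc(o_2)$ (if $o_2$ is a \meth{deq}), both of which are actual steps of $o_2$ and hence strictly later than every step of $o_1$. The candidate contributions to $\orderpt(o_1)$ are $\lstart(o_1)$, $\lalloc(o_1)$, and---when $o_1$ is a \meth{deq} matched to an \meth{enq} $e_1$---the half-integer $\lstart(e_1) + \frac{1}{2}$. The first two are themselves steps of $o_1$ and therefore less than any step of $o_2$. The third is the main obstacle: $\lstart(e_1) + \frac{1}{2}$ is not a step of $o_1$, and $e_1$ itself need not have finished before $o_2$ begins. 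To handle it I would argue that $e_1$'s write to $\var{itemIndex}[\loc(o_1)]$ must precede $o_1$'s read of that same location, so $\lstart(e_1)$ occurs strictly before a step of $o_1$; since raw step times are integers, $\lstart(e_1) + \frac{1}{2}$ is strictly less than any step of $o_2$ and hence less than $\orderpt(o_2)$. Assembling these bounds across the four subcases gives $\orderpt(o_1) < \orderpt(o_2)$, completing the lemma.
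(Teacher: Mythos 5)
Your proof is correct and follows essentially the same route as the paper's: monotonicity of $\var{row}$ handles the row comparison, and the order-point comparison reduces to showing that $\orderpt(o)$ lies within $o$'s interval, with the only delicate term being $\lstart(e_1)+\frac{1}{2}$ for a matched $\meth{deq}$. The paper disposes of that term by citing Lemma~\ref{lemma:match-loc}, whereas you inline the same write-before-read argument; otherwise the two proofs differ only in being phrased directly versus contrapositively.
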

\begin{proof}
Assume that $o_1 \not\prec o_2$. 
If $\row(o_1) > \row(o_2)$, then $o_1$ does not precede $o_2$, 
since the value of $\var{row}$ is nondecreasing. 
Otherwise, $\row(o_1) = \row(o_2)$ and $\orderpt(o_1) \ge \orderpt(o_2)$. 
For all operations $o$, 
the time $\orderpt(o)$ occurs during $o$, 
since either $o$ takes a step at that time, 
or $o$ is a $\meth{deq}$ operation 
that matches an $\meth{enq}$ operation $e$, 
in which case $o$ ends after time $\orderpt(e) = \lstart(e)$ 
by Lemma~\ref{lemma:match-loc}. 
It follows that $o_1$ does not precede $o_2$. 
\end{proof}

\begin{lemma}
\label{lemma:loc-order}
If $e_1$ and $e_2$ are $\meth{enq}$ operations, 
then $e_1 \prec e_2$ if and only if $\loc(e_1) \ltlex \loc(e_2)$. 
If $d_1$ and $d_2$ are $\meth{deq}$ operations, 
then $(\row(d_1), \lalloc(d_1)) \ltlex (\row(d_2), \lalloc(d_2))$ 
if and only if $\loc(d_1) \ltlex \loc(d_2)$. 
\end{lemma}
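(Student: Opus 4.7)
The plan is to handle the enqueue half and the dequeue half separately, exploiting that there is only one enqueuer and that $\meth{f\&a}$ returns strictly increasing values on each object. For the enqueue half, since a single enqueuer performs all enq operations sequentially, the values $\orderpt(e) = \lstart(e)$ are all distinct and ordered by execution order; moreover the enqueuer-local variable $\var{row}$ is nondecreasing, so the first coordinate of $\loc(e)$, which is $\row(e)$, is also nondecreasing in execution order. Thus the order defined by $(\row(e), \orderpt(e))$ on enq operations coincides with the execution order, and it suffices to show that $\loc$ is strict-lex-monotone along the execution order: i.e., that for consecutive enq operations $e, e'$ we have $\loc(e) \ltlex \loc(e')$. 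I verify this by case analysis on whether $e$ detects overtaking. Without overtaking, $e$ writes only $(\var{row}, \var{head})$ and the enqueuer's state advances to $(\var{row}, \var{head}+1)$, so $\loc(e) = (\var{row}, \var{head})$ and $e'$ starts from a lex-greater pair. With overtaking, $e$ writes both $(\var{row}, \var{head})$ and $(\var{row}+1, 0)$, and $\loc(e)$ is one of these two depending on whether $e$ is matched by the unique dequeuer that wrote $\var{deqActive}[\var{row}, \var{head}]$; the state advances to $(\var{row}+1, 1)$, which is lex-greater than both candidate locations.

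For the dequeue half, the local variable $\var{i}$ is assigned once at the start of $\meth{deq}$ and never changes, so $\var{i} = \row(d)$ equals the first coordinate of $\loc(d)$. When $\row(d_1) \neq \row(d_2)$ the two sides of the claimed equivalence agree via their first-coordinate comparisons. When $\row(d_1) = \row(d_2) = i$, the last $\var{itemIndex}$ location read by $d$ has coordinates $(i, j^*)$ where $j^*$ is the value returned by the final $\var{tail}[i].\meth{f\&a}(1)$ executed by $d$, and that final f\&a occurs at time $\lalloc(d)$. Because successive $\meth{f\&a}(1)$ operations on the single object $\var{tail}[i]$ return strictly increasing values, the earlier of the two final f\&a's returns the smaller $j^*$, which is exactly the equivalence claimed.

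The main point requiring care is the overtaking branch of the enqueue argument, where $\loc(e)$ is defined by cases and both sub-cases must be compared to the next enqueue's starting pair; everything else is a routine combination of the $(\var{row}, \var{head})$ bookkeeping and the monotonicity of $\meth{f\&a}$.
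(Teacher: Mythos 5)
Your proof is correct and follows the same approach as the paper, which compresses the whole argument into two sentences (``the pair $(\var{row}, \var{head})$ increases lexicographically with each $\meth{enq}$ operation'' and ``line~\ref{line:deq-alloc} is the invocation of $\meth{f\&a}$ where $\loc(d)$ is obtained''). You have simply filled in the details the paper leaves implicit, in particular the case analysis on the overtaking branch where $\loc(e)$ is defined by cases.
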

\begin{proof}
There is only one enqueuer, 
and the pair $(\var{row}, \var{head})$ 
increases lexicographically with each $\meth{enq}$ operation. 
Line~\ref{line:deq-alloc} is the invocation of $\meth{f\&a}$ 
where $\loc(d)$ is obtained. 
\end{proof}

\begin{lemma}
\label{lemma:complete}
If $d$ is a $\meth{deq}$ operation, 
then for all $\meth{enq}$ operations $e'$ 
with $\loc(e') \ltlex \loc(d)$, 
there exists a $\meth{deq}$ operation $d'$ 
that matches $e'$. 
\end{lemma}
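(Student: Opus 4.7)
The plan is to distinguish three cases based on how $e'$ writes to $\var{itemIndex}$. Let $k$ denote the value of $\var{enqCount}$ during $e'$ and write $(r,m) = \loc(d)$. Either (a) $e'$ writes a single cell, namely $\loc(e') = (r', j')$; or (b) $e'$ writes two cells $(r', j')$ and $(r'+1, 0)$ with $\loc(e') = (r', j')$; or (c) $e'$ writes two cells $(r'-1, h)$ and $(r', 0)$ with $\loc(e') = (r', 0)$. Case (b) is immediate: by the very definition of $\loc$, the dequeuer that wrote $\var{deqActive}[r', j']$ is a match for $e'$.

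For case (a), I would first produce a dequeuer $d''$ that obtained $j'$ from $\var{tail}[r'].\meth{f\&a}(1)$. The hypothesis $\loc(e') \ltlex (r, m)$ forces $\var{tail}[r']$ to have been incremented past $j'$: either $r' = r$ and $d$'s own $\meth{f\&a}$ witnesses this, or $r' < r$ and the enqueuer's overtake out of row $r'$ at some position $h^* > j'$ does. Because $e'$ fell into case (a), it read $\var{deqActive}[r', j'] = \val{false}$; so $d''$'s write of $\val{true}$ happened after $e'$'s read, and hence after $e'$'s write of $k$ to $\var{itemIndex}[r', j']$. Consequently $d''$'s subsequent read returns $k$, and since $k$ appears nowhere else in $\var{itemIndex}$, $d''$ is the sole caller of $\var{itemTaken}[k].\meth{f\&a}(1)$, so it wins and matches $e'$.

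For case (c), let $d^*$ be the dequeuer at $(r'-1, h)$, which by hypothesis does not match $e'$, and let $d^{**}$ be the dequeuer that obtained $0$ from $\var{tail}[r'].\meth{f\&a}(1)$, which exists by the same counting argument as in case (a). The crux is to show that $d^{**}$ reads $k$ rather than the initial $0$ from $\var{itemIndex}[r', 0]$. I would exploit that $d^{**}$'s initial read of $\var{row}$ returned $r'$, and $\var{row}$ can take the value $r'$ only once, when $e'$ assigns $\var{row} := r'$; in $e'$'s code this assignment strictly follows the transition write $\var{itemIndex}[r', 0] := k$, so that write precedes $d^{**}$'s entire operation and its read returns $k$. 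The only other possible caller of $\var{itemTaken}[k].\meth{f\&a}(1)$ is $d^*$, who either did not call it (read $0$) or called it and lost---a win by $d^*$ would make $d^*$ a match, contradicting $\loc(e') = (r', 0)$---so $d^{**}$ wins and matches $e'$.

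The hard part is the timing argument in case (c): namely, that each value of $\var{row}$ is assigned at a single point in time by a single enqueue (for the value $r'$, by $e'$ itself, after its transition write), which is what lets me sequence $d^{**}$'s entire execution strictly after the write $\var{itemIndex}[r', 0] := k$. Everything else is either definitional (case (b)) or a short causal chain through $\var{deqActive}$ (case (a)).
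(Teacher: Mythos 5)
Your proof is correct and follows essentially the same route as the paper's: the $\var{deqActive}$ handshake forces the unique dequeuer assigned to $\loc(e')$ to read the index after $e'$ writes it in the one-write case, the observation that $\var{itemIndex}[r',0]$ is written before $\var{row}$ is incremented handles the transition case, and your case (b) absorbs the paper's ``complication'' about $\loc(d)$ lying between the two writes. You are somewhat more explicit than the paper in pinning down which dequeuer wins the $\var{itemTaken}$ race (the paper only argues that some dequeuer reads the index and hence some dequeuer matches), but the underlying mechanism is identical.
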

\begin{proof}
Fix an $\meth{enq}$ operation $e'$ with $\loc(e') \ltlex \loc(d)$. 
It suffices to show that some process reads the index written by $e'$, 
since it follows that some $\meth{deq}$ operation matches $e'$. 
If $e'$ writes exactly one location $(i, j)$ 
in the array $\var{itemIndex}$, 
then no dequeuer reads that location beforehand, 
as otherwise the enqueuer would read $\val{true}$ 
from $\var{deqActive}[i, j]$. 
Nevertheless, some $\meth{deq}$ operation does perform the read. 
In each row, 
the set of locations read by dequeuers is a prefix of the row, 
and some dequeuer reads a location to the right of $e'$. 
If $i < \row(d)$, 
a suitable witness is the $\meth{deq}$ operation 
that causes the variable $\var{row}$ to be incremented; 
if $i = \row(d)$, a suitable witness is $d$ itself. 
When $e'$ writes two locations of the array $\var{itemIndex}$, 
the second write necessarily precedes any corresponding read, 
since it is performed before the enqueuer increments $\var{row}$. 
The remaining arguments parallel the one-write case, 
with one complication: 
it may be the case that $\loc(d)$ is between 
the locations of the first and second write. 
In this case, $\loc(e') < \loc(d)$ if and only if 
the $\meth{deq}$ operation that triggered the second write matches $e$. 
\end{proof}

\begin{lemma}
\label{lemma:linearize}
The order $\prec$ is a valid linearization order. 
\end{lemma}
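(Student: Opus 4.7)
The plan is to verify that the return value of every $\meth{deq}$ operation agrees with what one obtains by executing all operations sequentially in the order $\prec$; combined with Lemma~\ref{lemma:precedes}, this yields linearizability. I would lean on two preliminary observations: first, whenever a $\meth{deq}$ $d$ matches an $\meth{enq}$ $e$, Lemma~\ref{lemma:match-loc} and the definition of $\orderpt$ force $\row(e) = \row(d)$ and $\orderpt(d) \ge \lstart(e) + \frac{1}{2} > \orderpt(e)$, so $e \prec d$; second, by Lemma~\ref{lemma:loc-order}, the $\meth{enq}$ operations are $\prec$-ordered by $\loc$.

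For a $\meth{deq}$ $d$ matched with $e$, the goal is to place $e$ at the head of the sequential queue at $d$'s position. Lemma~\ref{lemma:complete} supplies a (unique, by Lemma~\ref{lemma:unique-match}) matching $\meth{deq}$ $d'$ for each $\meth{enq}$ $e'$ with $\loc(e') \ltlex \loc(e)$, so I must check $d' \prec d$. If the rows differ this is immediate; in the same row Lemma~\ref{lemma:loc-order} gives $\lalloc(d') < \lalloc(d)$, and a short case split on whether $\lalloc(d)$ or $\lstart(e) + \frac{1}{2}$ realizes $\orderpt(d)$, combined with $\lstart(e') < \lstart(e)$ from the enqueuer's sequentiality, yields $\orderpt(d') < \orderpt(d)$. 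A symmetric computation shows that whenever $d''$ matches some $e''$ with $\loc(e) \ltlex \loc(e'')$ one has $d \prec d''$, ruling out any $\meth{deq}$ preceding $d$ in $\prec$ that would remove an item beyond $e$.

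For a $\meth{deq}$ $d$ that returns $\bot$ after reading $\var{itemIndex}[i, j] = 0$ at some time $\tau_d$, I would show that the sequential queue is empty at $d$'s position. The crux is the sub-claim that every $\meth{enq}$ $e$ with $e \prec d$ satisfies $\loc(e) \ltlex (i, j)$. Enqueues in earlier rows are immediate; for $e$ in row $i$ with second coordinate exceeding $j$, the enqueuer's monotone traversal of row $i$ would require $\var{itemIndex}[i, j]$ to be written before $\lstart(e) < \lalloc(d) \le \tau_d$, contradicting the read of $0$. The only remaining possibility, $\loc(e) = (i, j)$, can occur only for a two-write enqueue matched with the unique dequeuer that set $\var{deqActive}[i, j]$, which is $d$ itself---impossible since $d$ returns $\bot$. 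Given the sub-claim, Lemma~\ref{lemma:complete} provides a matching $d_e$ for each such $e$, and the same $\orderpt$ arithmetic as above (now with $\orderpt(d) = \lalloc(d)$) gives $d_e \prec d$.

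The main obstacle is the sub-claim in the $\bot$-returning case: one must justify that pegging $d$'s linearization at $\lalloc(d)$ excludes exactly those enqueues whose index was not yet visible at $d$'s read, using the enqueuer's monotone traversal of each row and the two-write convention triggered by an overtake. Once that is in hand, the rest reduces to routine bookkeeping of $\orderpt$-inequalities built from the existence and uniqueness results already established.
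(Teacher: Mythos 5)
Your proposal is correct and follows essentially the same route as the paper's proof, which verifies return values against the sequential execution (organized as an induction on the number of operations) using Lemmas~\ref{lemma:unique-match}, \ref{lemma:match-loc}, \ref{lemma:loc-order}, and~\ref{lemma:complete} together with the same $\orderpt$ arithmetic. Your explicit sub-claim for the $\bot$-returning case---that every $\meth{enq}$ $e \prec d$ satisfies $\loc(e) \ltlex \loc(d)$---is a detail the paper compresses into ``by the definition of $\prec$,'' but it is the same argument.
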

\begin{proof}
Given Lemmas~\ref{lemma:total-order}~and~\ref{lemma:precedes}, 
the only property remaining to be established 
is that the return values are consistent 
with the sequential execution determined by the order $\prec$. 
We prove this by induction on the number of operations. 

Specifically, 
the inductive hypothesis is that through $m$ operations, 
all return values are correct, 
and the contents of the queue 
are the items that have been enqueued but not dequeued, 
in the order in which they were enqueued. 
The basis $m = 0$ is trivial. 
Assuming the inductive hypothesis for $m$, 
if the next operation is an $\meth{enq}$ operation, 
the inductive hypothesis holds for $m + 1$, 
since by Lemma~\ref{lemma:match-loc}, 
$\meth{enq}$ operations are not preceded 
by matching $\meth{deq}$ operations. 
If the next operation is a $\meth{deq}$ operation $d$, 
then by Lemma~\ref{lemma:complete}, 
every $\meth{enq}$ operation $e'$ with $\loc(e') \ltlex \loc(d)$ 
has a matching $\meth{deq}$ operation $d'$. 
Each such $d'$ satisfies $\lalloc(d') < \lalloc(d)$ 
by Lemma~\ref{lemma:loc-order}. 
If $d$ returns $\bot$, 
then by the definition of $\prec$, 
it is the case that $e' \prec d$ if and only if $d' \prec d$, 
so the queue is empty and remains empty. 
If $d$ matches an $\meth{enq}$ operation $e$, 
then $e$ is the first $\meth{enq}$ operation not yet matched, 
by a similar argument. 
\end{proof}

We can now prove Theorem~\ref{theorem:correct}. 

\begin{proof}[Proof of Theorem~\ref{theorem:correct}]
Algorithm~\ref{algo:semd} is wait-free by Lemma~\ref{lemma:waitfree} 
and is a linearizable implementation of \ty{Queue} by Lemma~\ref{lemma:linearize}. 
\end{proof}

\begin{theorem}
Algorithm~\ref{algo:semd} can be implemented by any type of consensus number $2$. 
\end{theorem}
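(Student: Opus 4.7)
The plan is to reduce this theorem to the Afek--Weisberger--Weisman result that $\ty{Fetch\&Add}$ belongs to Common2, combined with the standard fact that wait-free linearizable implementations compose.

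First, I would catalogue the shared base objects used by Algorithm~\ref{algo:semd}. The arrays $\var{item}$, $\var{itemIndex}$, and $\var{deqActive}$ consist of $\ty{Register}$ objects: each cell of $\var{item}$ and $\var{itemIndex}$ has the enqueuer as its unique writer, and each cell of $\var{deqActive}[i,j]$ is written only by the dequeuer whose $\meth{f\&a}$ on $\var{tail}[i]$ returned $j$, so single-writer multi-reader registers suffice throughout. The arrays $\var{tail}$ and $\var{itemTaken}$ consist of $\ty{Fetch\&Add}$ objects. The declarations annotated ``enqueuer-local'' or ``dequeuer-local'' are process-private and require no shared storage at all.

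Second, I would invoke the theorem of Afek, Weisberger, and Weisman~\cite{DBLP:journals/jal/AfekW99,DBLP:conf/podc/AfekWW92}, which asserts that any type $T$ of consensus number~$2$, together with $\ty{Register}$ objects (as is standard in the consensus-hierarchy setting where the consensus number of $T$ is defined relative to systems containing both $T$ and $\ty{Register}$), admits a wait-free linearizable implementation of $\ty{Fetch\&Add}$ for systems of any size. Substituting such an implementation for each $\ty{Fetch\&Add}$ base object in Algorithm~\ref{algo:semd} produces an algorithm whose only base objects are of type $T$ or $\ty{Register}$. Because the composition of wait-free linearizable implementations is itself wait-free and linearizable, every execution of the composed algorithm is indistinguishable at the $\ty{Queue}$ level from some execution of Algorithm~\ref{algo:semd} against atomic $\ty{Fetch\&Add}$ objects, so Theorem~\ref{theorem:correct} carries over.

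There is essentially no hard step here; the only point requiring any care is confirming that the cited $\ty{Fetch\&Add}$ construction accommodates the system size we use---namely one enqueuer together with an unbounded number of dequeuers, all of whom may contend on $\var{tail}$ and $\var{itemTaken}$. Membership in Common2, however, is exactly the assertion that $\ty{Fetch\&Add}$ is wait-free implementable from $T$ in systems of arbitrary size, so no further argument is needed beyond this citation-level bookkeeping.
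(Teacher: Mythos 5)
Your proposal is correct and takes essentially the same route as the paper, which simply cites the Afek--Weisberger--Weisman result that any type of consensus number $2$ implements $\ty{Fetch\&Add}$; the paper leaves implicit the bookkeeping you spell out (that the only non-register base objects are the $\ty{Fetch\&Add}$ arrays and that wait-free linearizable implementations compose). No gap.
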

\begin{proof}
By the results of Afek, Weisberger, and Weisman~\cite{DBLP:journals/jal/AfekW99,DBLP:conf/podc/AfekWW92}, any type of consensus number $2$ can implement $\ty{Fetch\&Add}$. 
\end{proof}

\section{The two-enqueuer case} The two-enqueuer adaptation of Algorithm~\ref{algo:semd} is presented as Algorithm~\ref{algo:temd}. 
The main idea is the same as in Li's adaptation, although the details are more complicated: operations by two real processes are scheduled onto one virtual process, which makes progress as long as either real process is active. 
This scheduling is accomplished by an $\ty{Agenda}$ object, with a sequential implementation presented as Algorithm~\ref{algo:agenda}. 
Herlihy's universal construction gives a two-process implementation from any type of consensus number $2$. 

Once an enqueuer schedules an enqueue operation $e$, it performs the steps that the enqueuer of Algorithm~\ref{algo:semd} would have up to the point where $e$ is complete. 
Only finitely many enqueue operations precede $e$, so this takes only finitely many steps. 
Exactly once per operation, the enqueue method reads a shared register. 
To ensure that both enqueuers continue to simulate the same trajectory, they reach consensus on the value of that read. 

\begin{theorem}
\label{theorem:correct2}
Algorithm~\ref{algo:temd} is a wait-free linearizable implementation of the type $\ty{Queue}$ for two enqueuers and any number of dequeuers that can be implemented by any type of consensus number $2$. 
\end{theorem}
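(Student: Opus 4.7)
The plan is to reduce Theorem \ref{theorem:correct2} to Theorem \ref{theorem:correct} by exhibiting each execution of Algorithm \ref{algo:temd} as a simulation of an execution of Algorithm \ref{algo:semd}, in which a single virtual enqueuer issues the enqueue operations in the order that the $\ty{Agenda}$ assigns to them. The $\ty{Agenda}$, being a two-process sequential type, is implementable wait-free from any type of consensus number $2$ by Herlihy's universal construction, and the same is true for the two-process $\ty{Consensus}$ instances used to agree on each read of $\var{deqActive}[\var{row},\var{head}]$. Combined with the fact that $\ty{Fetch\&Add}$ and $\ty{Register}$ are implementable from any consensus-$2$ type by the results of Afek, Weisberger, and Weisman, this disposes of the implementability clause.

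For correctness I would first argue that the simulation is well-defined. Because both real enqueuers observe the same agenda and resolve each $\var{deqActive}$ read by the same consensus value, their local copies of $\var{enqCount}$, $\var{head}$, and $\var{row}$ evolve in lockstep through the prefix of scheduled virtual enqueues that each has processed so far. Writes to the shared arrays $\var{item}$ and $\var{itemIndex}$ are therefore directed to the same cells with the same values, so a redundant write is harmless, and the per-operation branch is resolved identically by both simulators. Consequently, the projection of the execution onto the base objects shared with Algorithm \ref{algo:semd} is a legitimate execution of Algorithm \ref{algo:semd} in which a single enqueuer performs the virtual operations in $\ty{Agenda}$ order, interleaved with the unchanged dequeuers.

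Linearizability then follows by invoking Theorem \ref{theorem:correct} on the projected execution and lifting the resulting order back to the original execution. The lifted order respects real-time precedence because the ordering point of each virtual enqueue, its $\ty{Agenda}$ scheduling step, falls strictly between the invocation and return of its real counterpart, while dequeue ordering points are unchanged. For wait-freedom I would observe that at the moment a real enqueuer schedules its operation, only finitely many virtual enqueues sit ahead of it in the $\ty{Agenda}$, and the simulation of each costs a bounded number of base-object steps (the $U$ of Lemma \ref{lemma:waitfree} plus one $\meth{decide}$); thus every $\meth{enq}$ call on the virtual queue terminates in bounded steps, and $\meth{deq}$ inherits its bound from Lemma \ref{lemma:waitfree}.

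The main obstacle I anticipate is the well-definedness of the simulation under arbitrary interleavings: specifically, ruling out the possibility that concurrent simulators produce a shared-memory trace that no execution of Algorithm \ref{algo:semd} could have produced. The crux is that every branch-causing read is routed through a fresh two-process consensus instance, so that both simulators commit to the same value of $\var{deqActive}[\var{row},\var{head}]$ before performing the dependent writes to $\var{itemIndex}$, and that each such consensus value is witnessed by an actual read performed by whichever enqueuer reaches that instance first. Once this invariant is formalized and checked against the pseudocode of Algorithm \ref{algo:temd}, the reduction to Theorem \ref{theorem:correct} goes through with no further surprises.
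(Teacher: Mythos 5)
Your proposal follows essentially the same route as the paper's own (itself only a sketch): construct a virtual single-enqueuer execution of Algorithm~\ref{algo:semd} from the two real enqueuers' duplicated, consensus-synchronized steps, argue that it is indistinguishable to the dequeuers and that each virtual enqueue's interval is contained in that of its real counterpart, and then invoke Theorem~\ref{theorem:correct}; you also correctly identify the well-definedness of this simulation as the crux. The only point the paper's sketch makes explicit that you gloss over is why the duplicated writes to the shared array $\var{row}$ are harmless, namely that the values written there only increase and the dequeuers read only $\max(\var{row})$.
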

\begin{proof}[Proof sketch]
The new enqueue method is clearly wait-free. 
Wait-freedom of the new dequeue method and linearizability follow from the fact that each execution of Algorithm~\ref{algo:temd} begets an execution of Algorithm~\ref{algo:semd} that has the same collection of enqueue operations, is indistinguishable to the dequeuers, and in which the ``real'' enqueue operations are active on a super-interval of the corresponding ``virtual'' enqueue operations. 
The real enqueuers both take essentially the same steps as the virtual enqueuer, and the virtual enqueuer is deemed to have taken a particular step when it is first taken by a real enqueuer. 
The construction is made possible by the fact that all of the steps that involve objects shared with the dequeuers are idempotent. 
There are several categories: reads; enqueuer writes to registers that are written exactly once; and writes to $\var{row}$. 
The latter are idempotent because the values written to $\var{row}$ increase over time and the dequeuers use only $\max(\var{row})$.
\end{proof}

\begin{algorithm}
\caption{Agenda object (sequential version)}
\begin{algorithmic}[1]
\STATE $\var{item} : \arrayof{1..}{\ty{item}}$
\STATE $\var{tail} : \ty{integer}$ \COMMENT {initially $0$}
\bigskip
\METH {$\meth{append}(\var{x} : \ty{item}) : \ty{integer}$}
    \STATE $\var{tail} := \var{tail} + 1$
    \STATE $\var{item}[\var{tail}] := \var{x}$
    \RETURN {$\var{tail}$}
\ENDMETH
\bigskip
\METH {$\meth{get}(\var{k} : \ty{integer}) : \ty{item}$}
    \RETURN {$\var{item}[\var{k}]$}
\ENDMETH
\end{algorithmic}
\label{algo:agenda}
\end{algorithm}

\begin{algorithm}
\caption{Two-enqueuer multiple-dequeuer queue}
\begin{algorithmic}[1]
\STATE $\var{agenda} : \ty{Agenda}$ \COMMENT {enqueuer-local; initially empty}
\STATE $\var{deqActive} : \arrayof{0.., 0..}{\ty{boolean}}$ \COMMENT {initially $\val{false}$}
\STATE $\var{deqActiveRead} : \arrayof{0.., 0..}{\ty{Consensus}}$ \COMMENT {enqueuer-local; initially $\bot$}
\STATE $\var{enqCount} : \arrayof{0..1}{\ty{integer}}$ \COMMENT {enqueuer-local; initially $0$}
\STATE $\var{head} : \arrayof{0..1}{\ty{integer}}$ \COMMENT {enqueuer-local; initially $0$}
\STATE $\var{item} : \arrayof{1..}{\ty{item}}$
\STATE $\var{itemIndex} : \arrayof{0.., 0..}{\ty{integer}}$ \COMMENT {initially $0$}
\STATE $\var{itemTaken} : \arrayof{1..}{\ty{Fetch\&Add}}$ \COMMENT {initially $0$}
\STATE $\var{row} : \arrayof{0..1}{\ty{integer}}$ \COMMENT {initially $0$}
\STATE $\var{tail} : \arrayof{0..}{\ty{Fetch\&Add}}$ \COMMENT {accessed only by dequeuers; initially $0$}
\bigskip
\METH {$\meth{enq}(\var{x} : \ty{item})$}
    \STATE $\var{k} := \var{agenda}.\meth{append}(\var{x})$ \COMMENT {returns the index of $\var{x}$ in the agenda}
    \WHILE {$\var{enqCount}[\meth{id}] < \var{k}$}
        \STATE $\var{enqCount}[\meth{id}] := \var{enqCount}[\meth{id}] + 1$
        \STATE $\var{item}[\var{enqCount}[\meth{id}]] := \var{agenda}.\meth{get}(\var{enqCount}[\meth{id}])$
        \STATE $\var{itemIndex}[\var{row}[\meth{id}], \var{head}[\meth{id}]] := \var{enqCount}[\meth{id}]$
        \STATE $\var{b} := \var{deqActive}[\var{row}[\meth{id}], \var{head}[\meth{id}]]$ \label{line:sharedread}
        \IF {$\var{deqActiveRead}[\var{row}[\meth{id}], \var{head}[\meth{id}]].\meth{decide}(\var{b})$}
            \STATE $\var{itemIndex}[\var{row}[\meth{id}] + 1, 0] := \var{enqCount}[\meth{id}]$
            \STATE $\var{head}[\meth{id}] := 1$
            \STATE $\var{row}[\meth{id}] := \var{row}[\meth{id}] + 1$
        \ELSE
            \STATE $\var{head}[\meth{id}] := \var{head}[\meth{id}] + 1$
        \ENDIF
    \ENDWHILE
\ENDMETH
\bigskip
\METH {$\meth{deq}() : \ty{item}$}
    \STATE $\var{i} := \max(\var{row})$
    \LOOP
        \STATE $\var{j} := \var{tail}[\var{i}].\meth{f\&a}(1)$
        \STATE $\var{deqActive}[\var{i}, \var{j}] := \val{true}$
        \STATE $\var{k} := \var{itemIndex}[\var{i}, \var{j}]$
        \IF {$\var{k} = 0$}
            \RETURN {$\bot$}
        \ELSIF {$\var{itemTaken}[\var{k}].\meth{f\&a}(1) = 0$}
            \RETURN {$\var{item}[\var{k}]$}
        \ENDIF
    \ENDLOOP
\ENDMETH
\end{algorithmic}
\label{algo:temd}
\end{algorithm}

\section{Discussion} Algorithm~\ref{algo:temd} also works 
in the \concept{unbounded concurrency} model 
of Gafni, Merritt, and Taubenfeld~\cite{DBLP:conf/podc/GafniMT01}. 
It establishes that two-enqueuer $\ty{Queue}$ 
belongs to the unbounded concurrency version of Common2 
via the $\ty{Fetch\&Add}$ implementation 
due to Afek, Gafni, and Morrison~\cite{DBLP:journals/dc/AfekGM07}. 
Given the unbounded concurrency $\ty{Stack}$ by the same authors 
and a similar adaptation of Li's two-dequeuer $\ty{Queue}$, 
there is currently no set of restrictions 
for which a bounded concurrency algorithm is known 
and an unbounded concurrency algorithm is not. 

Both our algorithm and Li's require that either the enqueuers 
or the dequeuers agree on a total order for the items. 
A general algorithm, if one exists, 
will have to work in the absence of such an agreement, 
though we note that the $\ty{Swap}$ implementation 
of Afek, Weisberger, and Weisman~\cite{DBLP:conf/podc/AfekWW92} 
achieves a similar feat. 
On the other hand, 
the implementation of 
Herlihy and Wing~\cite{DBLP:journals/toplas/HerlihyW90} 
can be modified to be lock-free, 
so any impossibility result will have to distinguish 
lock-free implementations from wait-free ones, 
a property absent from many wait-free impossibility results 
in the literature. 


\bibliography{paper}

\begin{thebibliography}{10}

\bibitem{DBLP:journals/dc/AfekGM07}
Yehuda Afek, Eli Gafni, and Adam Morrison.
\newblock Common2 extended to stacks and unbounded concurrency.
\newblock {\em Distributed Computing}, 20(4):239--252, 2007.

\bibitem{DBLP:journals/jal/AfekW99}
Yehuda Afek and Eytan Weisberger.
\newblock The instancy of snapshots and commuting objects.
\newblock {\em J. Algorithms}, 30(1):68--105, 1999.

\bibitem{DBLP:conf/podc/AfekWW92}
Yehuda Afek, Eytan Weisberger, and Hanan Weisman.
\newblock A completeness theorem for a class of synchronization objects
  (extended abstract).
\newblock In {\em PODC}, pages 159--170, 1993.

\bibitem{DBLP:conf/wdag/David04}
Matei David.
\newblock A single-enqueuer wait-free queue implementation.
\newblock In {\em DISC}, pages 132--143, 2004.

\bibitem{theses/David}
Matei David.
\newblock Wait-free linearizable queue implementations.
\newblock Master's thesis, U.\ Toronto, 2004.

\bibitem{DBLP:conf/wdag/DavidBF05}
Matei David, Alex Brodsky, and Faith~Ellen Fich.
\newblock Restricted stack implementations.
\newblock In {\em DISC}, pages 137--151, 2005.

\bibitem{DBLP:conf/podc/GafniMT01}
Eli Gafni, Michael Merritt, and Gadi Taubenfeld.
\newblock The concurrency hierarchy, and algorithms for unbounded concurrency.
\newblock In {\em PODC}, pages 161--169, 2001.

\bibitem{DBLP:journals/toplas/Herlihy91}
Maurice Herlihy.
\newblock Wait-free synchronization.
\newblock {\em ACM Trans. Program. Lang. Syst.}, 13(1):124--149, 1991.

\bibitem{DBLP:journals/toplas/HerlihyW90}
Maurice Herlihy and Jeannette~M. Wing.
\newblock Linearizability: A correctness condition for concurrent objects.
\newblock {\em ACM Trans. Program. Lang. Syst.}, 12(3):463--492, 1990.

\bibitem{theses/Li}
Zongpeng Li.
\newblock Non-blocking implementations of queues in asynchronous distributed
  shared-memory systems.
\newblock Master's thesis, U.\ Toronto, 2001.

\bibitem{theses/Weisman}
Hanan Weisman.
\newblock Implementing shared memory overwriting objects.
\newblock Master's thesis, Tel-Aviv U., 1994.

\end{thebibliography}
\bibliographystyle{plain}
\end{document}